\numberwithin{equation}{section}
\newtheorem{thm}{Theorem}[section]
\newtheorem{prop}{Proposition}[section]
\newtheorem{lem}{Lemma}[section]
\newtheorem{rem}{Remark}[section]
\newtheorem{cor}{Corollary}[section]
\newtheorem{defn}{Definition}[section]
\newcommand{\thmref}[1]{Theorem~{\rm \ref{#1}}}
\newcommand{\lemref}[1]{Lemma~{\rm \ref{#1}}}
\newcommand{\propref}[1]{Proposition~{\rm \ref{#1}}}
\newcommand{\defnref}[1]{Definition~{\rm \ref{#1}}}
\def\one{{\hbox{1{\kern -0.35em}1}}}
\title{Consensus over a Random Network Generated by
  i.i.d. Stochastic Matrices} 
\author{ Qingshuo Song, \thanks{Department of Mathematics, City
    University of Hong Kong, 83 Tat Chee Avenue, Kowloon Tong, Hong
    Kong, {\tt
      song.qingshuo@cityu.edu.hk}.}
  \and Guanrong Chen, \thanks{Department of Electronic Engineering, City
    University of Hong Kong, 83 Tat Chee Avenue, Kowloon Tong, Hong
    Kong, {\tt gchen@ee.cityu.edu.hk}.}
  \and Daniel W. C. Ho
  \thanks{Department of Mathematics, City, University of Hong Kong, 83
    Tat Chee Avenue, Kowloon Tong, Hong Kong, {\tt
      madaniel@cityu.edu.hk}.}
}
\begin{document}

\maketitle
\begin{abstract}
  Our goal is to find a necessary and sufficient condition on the
  consensus over a random network, generated by i.i.d. stochastic
  matrices. We show that the consensus problem in three different
  convergence modes (almost surely, in probability, and in $L^1$) 
  are equivalent, thus have the same necessary
  and sufficient condition.  We obtain the necessary and sufficient
  condition through the stability in a projected subspace.

  {\bf Keywords and Phrases.} Consensus, stability, random network,
  stochastic matrix.
\end{abstract}

\section{Introduction}
We consider a stochastic linear difference equation,
$$X(t) = A(t) X(t-1), \quad t = 1, 2, \ldots$$
where the states $\{X(t)\}$ is an $\mathbb{R}^N$-valued sequence, and
$\{A(t)\}$ is a sequence of i.i.d. (independent and identically
distributed) right 
stochastic matrices (non-negative matrix with  each row summing to
$1$). The system is said to reach consensus if, for any initial state, 
$\max_{1\le i,j \le N} |X_i(t) - X_j(t)|$ converges to zero as $t \to
\infty$ in an appropriate sense. Since $X(t)$ is random, there
are  different modes of consensus: Almost surely consensus, in
probability consensus, and $L^1$ consensus.  

Consensus problem over a difference equation has wide applications in
random network theory (e.g., \cite{HM05, TJ08} and the references 
therein). \cite{HM05} studies the consensus  in
probability. \cite{TJ08} establish an elegant  necessary and sufficient
condition  for almost surely consensus by investigating the ergodicity
of a random matrix sequence:
\begin{equation}
  \label{eq:nscond}
  |\lambda_2(\mathbb{E} [A(1)])| <1
\end{equation}
where $\mathbb{E}[\cdot]$ is the expectation operator and
$\lambda_2(\cdot)$ is the second largest eigenvalue (in absolute
value) of the argument matrix. 
Note that, almost surely consensus implies in probability consensus,
and thus \eqref{eq:nscond} is obviously a sufficient condition for in
probability consensus.   

In this work, looking further into the specific nature of
$\{X(t)\}$, we show that the consensus in all three modes are actually
indifferent, hence  \eqref{eq:nscond}  gives necessary and  
sufficient condition for consensus in all three modes. In addition, by
using a completely different methodology in contrast to \cite{TJ08},
our result applies to a more general setting: their restriction on
the space of stochastic matrices with strictly positive diagonal
entries can be relaxed (see Remark~\ref{r-setup}).

The main ingredient of our work is that, based on the observation of a
relation between consensus  and  stability, the original consensus
problem on a sequence is reduced to the stability problem on a projected
sequence in a subspace. As a result, we can focus our study on
the eigenspace structure of the  projection operator. As a by-product,
we offer a simple proof of consensus to  deterministic linear networks.

The rest of the paper is arranged as follows: We start with the
problem formulation in section 2, where a crucial result on the
relation between consensus and stability is presented. In
section 3, a simple proof of consensus on a deterministic sequence is
provided, which can be read independently for readers only interested
in the deterministic case. The main result, the necessary and sufficient
condition for consensus of a random network, is established in section
4. Finally, we conclude our investigation in section 5.

\section{Problem formulation}
In the first subsection,  a crucial result
(\thmref{t-equivalence}) on the equivalence of consensus and stability
in a subspace will be presented under a general setup of consensus
problem. This theorem can be applied to very general cases setup,
including nonlinear and random sequences, and plays an 
important role throughout the paper. In the second subsection, the main
consensus problem is formulated using a linear stochastic difference
equation.

Before proceeding,  let us recall some standard notations:
\begin{enumerate}
\item In (column) vector space $\mathbb{R}^N$,  $x_i$ is the $i$th coordinate
  of vector $x\in \mathbb{R}^N$; $l^p$-norm is $\|x\|_p = \sum_{i=1}^N
  |x_i|^p,$  $ \forall 1\le p \le \infty$; $x^T$ denotes the transpose
  of $x$.
\item In square real matrix space
  $\mathbb{R}^{N\times N}$,  $I$ is the identity matrix; for all $A\in
  \mathbb{R}^{N\times N}$, $\|A\|_p = \max_{\|x\| = 1} \|Ax\|_p$ for all
  $1\le p \le \infty$; the eigenvalues will be arranged in order of
  $|\lambda_1(A)| \ge |\lambda_2(A)| \ge \cdots \ge |\lambda_N(A)|$;
  the spectral radius refers to  $\rho(A) = |\lambda_1(A)|$.   
\item $\|\cdot\|$ is used in the formula if it is valid for all
  $l^p$-norms.  
\item Given a probability space $(\Omega, \mathcal{F}, \mathbb{P})$,
  we denote by $\mathbb{E} = \mathbb{E}^{\mathbb{P}}$ the expectation
  under $\mathbb{P}$. $L^p$ refers to $L^p(\Omega, \mathcal{F},
  \mathbb{P})$: for random vector $Y: \Omega\to \mathbb{R}^N$, the
  $L^p$-norm is $\|Y\|_{L^p} = (\int_\Omega \|Y(\omega)\|_2
  \mathbb{P}(d\omega))^{1/p} = (\mathbb{E} [ (\|Y\|_2)^p ])^{1/p}.$
\end{enumerate}

\subsection{ A general consensus problem}

Let $(\Omega, \mathcal{F}, \mathbb{P}, \mathbb{F})$ be a filtered
probability space, where  $\mathbb{F} = \{\mathcal{F}_t: t = 0, 1, 2, 
\ldots\}$ is a sequence of increasing $\sigma$-algebras with
$\mathcal{F}_\infty \subset \mathcal{F}$. We consider an
$\mathbb{F}$-adapted sequence $\{X(t)\}$ taking values in
$\mathbb{R}^N$. In other words, $X(t)$ is a measurable mapping from
$(\Omega, \mathcal{F}_t) \to (\mathbb{R}^N,
\mathcal{B}(\mathbb{R}^N))$, where $\mathcal{B}(\mathbb{R}^N)$ is the
Borel $\sigma$-algebra on $\mathbb{R}^N$. Such a sequence
includes the general form of $$X(t) = f_t(X(t-1), \ldots, X(1))$$  
for some measurable function $f_t$, and  emphasizes its independence
of future events
First, we start from the precise definition of consensus on random
sequence in three different modes. As usual, $X(t, \omega)$ will be
used instead of $X(t)$ when we need to emphasize its
dependence on a sample path $\omega \in \Omega$. 

\begin{defn}
  [Consensus of a sequence]\label{d-3modes}
  Let $\{X(t)\}$ be an $\mathbb{F}$-adapted $\mathbb{R}^N$-valued
  random sequence. $\{X(t)\}$ is said to reach consensus
  \begin{enumerate}
  \item in probability, if 
    $$\lim_{t\to \infty} \mathbb{P} \Big\{\omega\in \Omega:  \max_{1\le i,j \le N}
    |X_i(t, \omega) - X_j(t,\omega)| > \varepsilon \Big\} = 0, \ \forall
    \varepsilon>0  .$$ 
  \item
    almost surely  $($with probability $1)$, if 
     $$\mathbb{P} \Big\{\omega\in \Omega: \lim_{t\to \infty}
     \max_{1\le i,j \le N} |X_i(t, \omega) - X_j(t,\omega)| = 0 \Big\}
     = 1.$$  
  \item
    in $L^p$  $(p\ge 1)$, if 
    $$\lim_{t\to \infty} \mathbb{E} \Big[\max_{1\le i,j \le N}
    |X_i(t) - X_j(t)|^p \Big] = 0.$$ 
  \end{enumerate}
\end{defn}

\begin{defn}
  [Stability of a sequence]
  $\{X(t)\}$ is said to be stable $($at zero$)$
  \begin{enumerate}
  \item in probability, if $$ \lim_{t\to \infty} \mathbb{P} \Big\{\omega
    \in \Omega:  
    \|X(t, \omega)\|> \varepsilon \Big\} = 0, \ \forall \varepsilon>0.$$
  \item
    almost surely,
    if $$\mathbb{P} \Big\{\omega\in \Omega: \lim_{t\to \infty} \|X(t,
    \omega)\| = 0 \Big\}   = 1.$$ 
  \item
    in $L^p$  $(p\ge 1)$, if  $$ \lim_{t\to \infty} \mathbb{E} \Big[\|X(t,
    \omega)\|^p \Big] = 0.$$
  \end{enumerate}
\end{defn}

Define a subspace of $\mathbb{R}^N$ by $\mathbb{R}_0 = \{x\in
\mathbb{R}^N: x_1 = x_2 = \ldots = x_N\}.$ 
Let  $\Pi$ be a projection operator on $\mathbb{R}_0$, i.e.
$$\Pi x = \langle x, v_0\rangle v_0, \quad  \forall x\in \mathbb{R}_0,$$ 
where $\langle\cdot, \cdot\rangle$ is  inner product,  $v_0\in
\mathbb{R}_0$ is an $l^2$-norm unit vector. We thus have  
the orthogonal projection $\Pi^\perp : \mathbb{R}^N \to
\mathbb{R}^\perp$ by $\Pi^\perp = I - \Pi$, so that the orthogonal
decomposition  is valid
\begin{equation}
  \label{eq:pip}
  x = \Pi x + \Pi^\perp x, \ \forall x\in \mathbb{R}^N.
\end{equation}
The following theorem shows that the consensus of a sequence in
$\mathbb{R}^N$ is equivalent to the
stability of the sequence projected on the subspace
$\mathbb{R}_0^\perp$.  
\begin{thm}
  \label{t-equivalence}
$\{X(t)\}$ reaches consensus almost surely $($respectively, in probability,
or in $L^p)$ if and only if $\{\Pi^\perp X(t)\}$ is
stable almost surely $($respectively, in probability,
or in $L^p)$.
\end{thm}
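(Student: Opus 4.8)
The plan is to reduce everything to a single elementary estimate: the quantity $\max_{i,j}|X_i(t)-X_j(t)|$ and the norm $\|\Pi^\perp X(t)\|$ are comparable, in the sense that each is bounded by a constant multiple of the other, with constants depending only on $N$ and the choice of $l^p$-norm. Once such a two-sided bound is in hand, all three convergence modes transfer back and forth automatically, since each mode in Definition~\ref{d-3modes} and each mode in the stability definition is monotone under multiplication of the relevant scalar quantity by a fixed constant (for the $L^p$ statement one uses that $(cx)^p = c^p x^p$; for convergence in probability one rescales $\varepsilon$; for almost sure convergence the constant is irrelevant on each sample path).

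First I would observe that the map $x \mapsto \max_{1\le i,j\le N}|x_i - x_j|$ is a seminorm on $\mathbb{R}^N$ whose kernel is exactly $\mathbb{R}_0$, the diagonal line $\{x : x_1 = \cdots = x_N\}$. Since $\Pi x \in \mathbb{R}_0$, we have $x_i - x_j = (\Pi^\perp x)_i - (\Pi^\perp x)_j$ for all $i,j$, so that
\begin{equation}
  \label{eq:pf-seminorm}
  \max_{1\le i,j\le N}|x_i - x_j| \;=\; \max_{1\le i,j\le N}\big|(\Pi^\perp x)_i - (\Pi^\perp x)_j\big|, \qquad \forall x\in\mathbb{R}^N.
\end{equation}
Thus consensus of $\{X(t)\}$ is literally the same statement as consensus of $\{\Pi^\perp X(t)\}$, and it suffices to prove: for $y \in \mathbb{R}_0^\perp$, the seminorm $\max_{i,j}|y_i-y_j|$ and the norm $\|y\|$ are equivalent. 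The direction $\max_{i,j}|y_i - y_j| \le 2\|y\|_\infty \le 2\|y\|_p$ (up to the usual norm-equivalence constants on $\mathbb{R}^N$) is immediate. For the reverse direction, the key point is that on the finite-dimensional subspace $\mathbb{R}_0^\perp$ the seminorm $\max_{i,j}|y_i-y_j|$ is actually a genuine norm: if $y\in\mathbb{R}_0^\perp$ and $\max_{i,j}|y_i-y_j| = 0$ then $y\in\mathbb{R}_0\cap\mathbb{R}_0^\perp = \{0\}$. Any two norms on a finite-dimensional space are equivalent, so there is a constant $c_N > 0$ with $\|y\|_p \le c_N \max_{i,j}|y_i-y_j|$ for all $y\in\mathbb{R}_0^\perp$; equivalently, via \eqref{eq:pf-seminorm} and $\Pi^\perp X(t)\in\mathbb{R}_0^\perp$,
\begin{equation}
  \label{eq:pf-twoside}
  c_N^{-1}\,\|\Pi^\perp X(t)\| \;\le\; \max_{1\le i,j\le N}|X_i(t)-X_j(t)| \;\le\; C_N\,\|\Pi^\perp X(t)\|
\end{equation}
for deterministic constants $0 < c_N^{-1} \le C_N < \infty$, pointwise in $\omega$.

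Finally I would push \eqref{eq:pf-twoside} through each of the three modes. Almost sure: on the event where $\|\Pi^\perp X(t,\omega)\|\to 0$, the left inequality forces $\max_{i,j}|X_i(t,\omega)-X_j(t,\omega)|\to 0$, and conversely via the right inequality; the two events therefore coincide up to null sets, giving equivalence of almost sure consensus and almost sure stability of $\Pi^\perp X(t)$. In probability: $\{\max_{i,j}|X_i(t)-X_j(t)|>\varepsilon\} \subseteq \{\|\Pi^\perp X(t)\| > \varepsilon/C_N\}$ and symmetrically, so the two families of probabilities vanish together. In $L^p$: raising \eqref{eq:pf-twoside} to the $p$-th power and taking expectations gives $c_N^{-p}\,\mathbb{E}\|\Pi^\perp X(t)\|^p \le \mathbb{E}\big[\max_{i,j}|X_i(t)-X_j(t)|^p\big] \le C_N^p\,\mathbb{E}\|\Pi^\perp X(t)\|^p$, so one side tends to zero iff the other does. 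This completes the proof.

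I do not expect a serious obstacle here; the only thing to be careful about is the measurability of $\|\Pi^\perp X(t)\|$ and of $\max_{i,j}|X_i(t)-X_j(t)|$ (both are continuous functions of $X(t)$, hence $\mathcal{F}_t$-measurable since $\{X(t)\}$ is $\mathbb{F}$-adapted and $\Pi^\perp$ is linear, so this is routine) and the bookkeeping of the norm-equivalence constants, which is precisely why the statement is phrased with $\|\cdot\|$ valid for all $l^p$-norms.
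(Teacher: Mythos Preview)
Your proof is correct and rests on the same core idea as the paper's: establish a pointwise two-sided inequality between $\max_{i,j}|X_i(t)-X_j(t)|$ and $\|\Pi^\perp X(t)\|$, then pass each convergence mode through it. The only real difference is in how the two-sided bound is obtained. The paper works concretely in the $\ell^\infty$-norm, getting $\|\Pi^\perp X(t)\|_\infty \le \max_{i,j}|X_i(t)-X_j(t)|$ from the identity $\|\Pi^\perp x\|_\infty = \min_{y\in\mathbb{R}_0}\|x-y\|_\infty$ (taking $y=(X_1(t),\ldots,X_1(t))^T$), and the reverse inequality with constant $2$ by the triangle inequality through $(\Pi X)_i(t)$; it then remarks that the in-probability and $L^p$ cases are similar. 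You instead invoke the equivalence of all norms on the finite-dimensional subspace $\mathbb{R}_0^\perp$, which is non-constructive but handles every $l^p$-norm uniformly and lets you treat all three modes in one stroke. Both routes are perfectly valid; the paper's gives explicit constants, yours is a bit slicker and more general in its packaging.
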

\begin{proof} We will show the equivalence of stability and consensus
  in the sense of 
  almost surely. The equivalence in probability and in $L^p$ can be
  similarly proved.
  \begin{enumerate}
  \item [($\Longrightarrow$)] Suppose $\{X(t)\}$ reaches consensus
    almost  surely. Define $Y(t)\in \mathbb{R}_0$ be a vector with all
    entries equal to the value of first coordinate of $X(t)$,
    i.e. $Y(t) = (X_1(t), X_1(t), \ldots, X_1(t))^T.$  Since  
    \begin{equation*}
      \begin{array}{ll}
        \| \Pi^\perp X(t)\|_\infty & = \min_{y\in \mathbb{R}_0} \|X(t)
        - y\|_\infty \\
        & \le \|X(t) - Y(t) \|_\infty \\
        & = \max_{1\le i \le N} |X_i(t) - X_1(t)| \\
        & \le \max_{1\le i,j \le N} |X_i(t) - X_j(t)| \to 0 
      \end{array}
    \end{equation*}
    as $t\to \infty$ almost surely. 
    Therefore, $\{\Pi^\perp X(t)\}$ is stable almost surely.
  \item [($\Longleftarrow$)] Suppose $\{\Pi^\perp X(t)\}$ is stable
    almost surely. Let $(\Pi X)_i(t)$ is the $i$th coordinate
    of vector $\Pi X(t)$. Note that, since $\Pi X(t) \in \mathbb{R}_0$,
    we have all coordinates with the same value, that is, $(\Pi
    X)_i(t) = (\Pi X)_j(t), \forall i, j$.  Therefore, by triangle
    inequality,  
    \begin{equation*}
      \begin{array}{ll}
        \max_{i,j} |X_i(t) - X_j(t)| & \le \max_{i,j} (|X_i(t) - (\Pi X)_i(t)|
        + |(\Pi X)_j(t) - X_j(t)|) \\ & \le \max_i |X_i(t) - (\Pi
        X)_i(t)| + \max_j |(\Pi X)_j(t) - X_j(t)| \\ & 
        \le  2 \|X(t) - \Pi X(t)\|_\infty \\
        & =  2 \|\Pi^\perp X(t)\|_\infty\to 0
      \end{array}
    \end{equation*}
    as $t\to \infty$ almost surely.  Therefore,  $\{X(t)\}$ reaches
    consensus almost surely. 
  \end{enumerate}
\end{proof}

\subsection{Consensus problem over linear random network}
We consider a similar setting as \cite{TJ08}. Let the space of $N\times N$
stochastic matrices be
\begin{equation}
  \label{eq:matrixspace}
  S_N = \Big\{A = (a_{ij})_{N\times N}: a_{ij} \ge 0, \  \sum_{j = 1}^N a_{ij}
  =1, \forall i,j  \Big\}
\end{equation}
and $\mathcal{B}(S_N)$ be the Borel $\sigma$-algebra on $S_N$. Let $\mu$
be a given probability distribution on $(S_N, \mathcal{B}(S_N))$, and
$\{A(t)\}$ be an $S_N$-valued i.i.d. sequence with distribution $\mu$. 
Let $\Omega = (S_N)^\infty$, $\mathbb{P} = \mu \times \mu \times
\cdots$,  and $\mathcal{F}_0 = \{\emptyset, \Omega\}$,  $\mathcal{F}_t =
\sigma(A(1), A(2), \ldots, A(t))$ for $t\ge 1$,
$\mathbb{F} = \cup_{t  = 0}^{\infty} \mathcal{F}_t $. 

Now, we consider a random sequence $\{X(t)\}$ given by
\begin{equation}
  \label{eq:X}
  X(t) = A(t) X(t-1), \ \forall t \in \mathbb{N}; \ X(0) = x.
\end{equation}
Then,  $\{X(t)\}$ is an $\mathbb{F}$-adapted sequence in the filtered
probability space $(\Omega, \mathcal{F}, \mathbb{P}, \mathbb{F})$.  
Observe that the distribution of $X(t)$ is determined by the initial
state $X(0) = x$ and distribution $\mu$.  Sometimes, we write $X^x(t)$
to emphasize the initial state $X(0) =x$ in the context. 

\begin{defn}
  [Consensus and stability of a distribution]
  \label{d-consensusd}
  A distribution $\mu$ is said to reach consensus
  almost surely
  $($respectively, in probability, or in $L^p)$, if $\{X^x(t)\}$ of
  \eqref{eq:X} generated by the distribution $\mu$ reaches consensus
  almost surely $($respectively, in probability, or in $L^p)$ for all
  initial states  $x\in \mathbb{R}^N$. 
\end{defn} 
Similar to \defnref{d-consensusd}, one can define stability for the
sequence \eqref{eq:X} generated by the distribution $\mu$.
\begin{rem}\label{r-setup}
  {\rm
    \cite{TJ08} had a different problem formulation in that the space
    $S_N$ of \eqref{eq:matrixspace} was replaced by a smaller space
    $$\hat S_N = \{ A\in S_N: \hbox{ all diagonal entries are strictly
      positive }\}.$$ 
    Indeed,  such a restriction is crucial in the proof of
    \cite[Theorem 3]{TJ08} to utilize the 
    \cite[Perron-Frobenius theorem]{BP94} on primitive matrix.   In
    our work, the diagonal entries 
    can be zero, which therefore covers  the results of \cite{TJ08} as
    a special case with $\mu(S_N\setminus   \hat S_N) =0$. 
    \qed
  }
\end{rem}

Next, our goal is to find a necessary and sufficient condition for the
consensus of the distribution $\mu$. 

\section{Necessary and sufficient condition for a deterministic
  sequence} 
A deterministic system can be treated as a special case of a random
system in the following sense. Let the probability distribution $\mu$
on $S_N$ satisfy $\mu(\{ A \}) = 1$ for some stochastic matrix $A \in
S_N$. Then,  $A(t) = A$ for all $t= 1,2, \ldots$, and the
sequence $\{X(t)\}$ of  \eqref{eq:X} becomes deterministic, and in
this case we have
$$X(t) = A^t x, \ \forall t = 0, 1, 2, \ldots$$
For convenience, we say $A$ reaches consensus if $\{X(t) = A^t x\}$
reaches consensus for all initial states $x\in \mathbb{R}^N$. Note
that, the deterministic consensus is indifferent to all three modes of
consensus, since the sample space $\Omega$ can be treated as a singleton
$\{A\}\times \{A\} \times \cdots$. Thus, this definition is consistent
with  \defnref{d-consensusd} on consensus (in  all three modes) of
distribution $\mu$  of the form $\mu(\{A\}) = 1$.  

The main idea in this section is that, thanks to
\thmref{t-equivalence}, it is equivalent to find a sufficient and
necessary condition of the stability of $\{\Pi^\perp X(t)\}$, which
turns out to be a sequence generated by the projection matrix
$\Pi^\perp A$. We will show that  $\Pi^\perp A$ is stable if and
only if $\rho(\Pi^\perp A)<1$ by \propref{p-Stable}. Together with the
fact that $\rho(\Pi^\perp A) = |\lambda_2(A)|$, by \propref{p-Jordan},
we will obtain the desired   necessary and sufficient  condition.

To proceed with the consensus on the deterministic sequence $X(t)$
generated by the stochastic matrix $A$,  we first recall  some
properties of 
stochastic matrices. Since each row sum of a stochastic matrix is
equal to 
$1$, its largest eigenvalue is $ \rho(A) = \lambda_1(A) =1$, i.e. $A x
= x$ for all $x\in \mathbb{R}_0$. Also, we have
\begin{equation}
  \label{eq:inftynorm}
  \|A^t x\|_\infty  \le \|x\|_\infty, \ \forall x\in \mathbb{R}^N, t\in
  \mathbb{N}.
\end{equation}
In addition, we have the following
useful results:
\begin{prop}
  \label{p-Jordan}
  Let $A\in S_N$ be a stochastic matrix. Then 
  \begin{enumerate}
  \item $A$ has a Jordan canonical form of
    \begin{equation}
      \label{eq:p-Jordan5}
      \Lambda = \left[
      \begin{array}{ll}
        1 & 0_{1\times (N-1)} \\
        0_{(N-1) \times 1} & \Lambda_{22}
      \end{array}\right],
    \end{equation} where $0_{m\times n}$ is $m\times n$ matrix
    with each entry being zero, and  $\Lambda_{22}$ is sub-matrix of
    Jordan form.
  \item The linear operator $\Pi^\perp$ defined in \eqref{eq:pip}
    satisfies 
    \begin{equation}
      \label{eq:p-Jordan1}
      \Pi^\perp A = \Pi^\perp A \Pi^\perp
    \end{equation}
  \item $\Pi^\perp A$, as a matrix, has a Jordan form of 
    $\Lambda_0 = \left[ 
      \begin{array}{ll}
        0 & 0 \\
        0 & \Lambda_{22}
      \end{array}\right]$ with $\Lambda_{22}$ defined in
    \eqref{eq:p-Jordan5}. In particular, $\rho(\Pi^\perp A) =
    |\lambda_2 (A)|$.
  \end{enumerate}
\end{prop}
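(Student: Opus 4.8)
The plan is to establish the three claims in order, since each builds on the previous one. For part (1), I would invoke the standard fact that every stochastic matrix has spectral radius $1$ with $\lambda_1(A) = 1$ achieved (because $A\one = \one$, where $\one$ is the all-ones vector spanning $\mathbb{R}_0$), and then observe that this eigenvalue $1$ can be split off from the rest of the Jordan structure. Concretely, since $\one$ is an eigenvector for eigenvalue $1$, one extends $\{\one\}$ to a basis in which $A$ is in Jordan form; the block containing the distinguished eigenvector $\one$ is a $1\times 1$ block equal to $[1]$ because $\|A^t x\|_\infty \le \|x\|_\infty$ from \eqref{eq:inftynorm} forces power-boundedness, so no nontrivial Jordan chain can sit on top of the eigenvalue $1$ (a genuine $2\times 2$ Jordan block for eigenvalue $1$ would make $\|A^t\|$ grow linearly). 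Hence the Jordan form decomposes as in \eqref{eq:p-Jordan5} with $\Lambda_{22}$ collecting all the remaining Jordan blocks.

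For part (2), the identity $\Pi^\perp A = \Pi^\perp A \Pi^\perp$ is equivalent to $\Pi^\perp A \Pi = 0$, i.e. $\Pi^\perp A$ annihilates $\mathbb{R}_0$. But $\Pi$ maps into $\mathbb{R}_0$, on which $A$ acts as the identity (every $x\in\mathbb{R}_0$ satisfies $Ax = x$), so $A\Pi = \Pi$, and therefore $\Pi^\perp A \Pi = \Pi^\perp \Pi = (I-\Pi)\Pi = 0$. This is a short computation using only the defining property of $\mathbb{R}_0$ and orthogonality of the decomposition \eqref{eq:pip}.

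For part (3), I would use part (2) to see that $\mathbb{R}_0$ is in the kernel of $\Pi^\perp A$ and that $\mathbb{R}_0^\perp$ is invariant under $\Pi^\perp A$ (from $\Pi^\perp A = \Pi^\perp A\Pi^\perp$, the range lies in $\mathbb{R}_0^\perp$). So in the orthogonal decomposition $\mathbb{R}^N = \mathbb{R}_0 \oplus \mathbb{R}_0^\perp$ the operator $\Pi^\perp A$ is block-lower-triangular, in fact block-diagonal with a $0$ on the $\mathbb{R}_0$ part. It remains to identify the restriction of $\Pi^\perp A$ to $\mathbb{R}_0^\perp$ with the restriction of $A$ to its complementary invariant subspace, i.e. with the Jordan block $\Lambda_{22}$ from part (1). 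The cleanest route is: $A$ and $\Pi^\perp A$ differ only on $\mathbb{R}_0$ (since on $\mathbb{R}_0^\perp$ we have $\Pi^\perp A x = \Pi^\perp A \Pi^\perp x$, but I must still compare this with $Ax$, which need not lie in $\mathbb{R}_0^\perp$), so I would instead argue at the level of characteristic polynomials: $\det(\Pi^\perp A - \lambda I)$ factors as $(-\lambda)\cdot \det(\Lambda_{22} - \lambda I)$ because the matrix is block-triangular with blocks $0$ (size $1$) and the $\mathbb{R}_0^\perp$-part. To pin down that the $\mathbb{R}_0^\perp$-part has the same Jordan type as $\Lambda_{22}$, I would note that $A$ leaves invariant the $A$-invariant complement $W$ of $\mathbb{R}_0$ afforded by part (1), that $\Pi^\perp$ restricted to $W$ composed with $A|_W$ equals $A|_W$ followed by projection, and conjugation relates these; since we only need similarity of matrices, I can pass to the basis adapted to $\mathbb{R}_0 \oplus \mathbb{R}_0^\perp$ and check that the similarity transformation intertwining $A$ with its Jordan form also intertwines $\Pi^\perp A$ with $\Lambda_0$. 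The spectral-radius statement $\rho(\Pi^\perp A) = |\lambda_2(A)|$ then follows, since the eigenvalues of $\Pi^\perp A$ are $0$ together with $\lambda_2(A), \ldots, \lambda_N(A)$, and $|\lambda_2(A)| \le 1$ so the largest modulus among them is $|\lambda_2(A)|$ (when $|\lambda_2(A)| = 0$ both sides are $0$).

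I expect the main obstacle to be the bookkeeping in part (3) linking the abstract $A$-invariant complement $W$ from part (1) to the concrete orthogonal complement $\mathbb{R}_0^\perp$: these two complements of $\mathbb{R}_0$ need not coincide, so one cannot literally read $\Lambda_{22}$ off the $\mathbb{R}_0^\perp$-block of $\Pi^\perp A$ without an argument. The resolution is that similarity class is all that is claimed, so I would work with the change-of-basis matrix $P = [\,v_0 \mid Q\,]$ whose columns are an orthonormal basis starting with $v_0$, verify that $P^{-1}(\Pi^\perp A)P$ is block-triangular of the stated shape, and then separately verify $P^{-1} A P$ has the matching block-triangular shape with $[1]$ in the top-left — after which a routine linear-algebra lemma (a block-triangular matrix with corner entry $1$ versus $0$ and the same lower block is similar to the corresponding Jordan forms) finishes the identification.
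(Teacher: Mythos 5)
Your parts (1) and (2) are correct and essentially identical to the paper's proof: eigenvalue $1$ with eigenvector $\one$, power-boundedness from \eqref{eq:inftynorm} ruling out a Jordan chain above it, and the computation $\Pi^{\perp}A\Pi=\Pi^{\perp}\Pi=0$. The gap is in part (3), in the step you call a ``routine linear-algebra lemma.'' That lemma --- from $\left[\begin{smallmatrix}1 & c^{T}\\ 0 & B\end{smallmatrix}\right]\sim\left[\begin{smallmatrix}1 & 0\\ 0 & \Lambda_{22}\end{smallmatrix}\right]$ conclude $\left[\begin{smallmatrix}0 & 0\\ 0 & B\end{smallmatrix}\right]\sim\left[\begin{smallmatrix}0 & 0\\ 0 & \Lambda_{22}\end{smallmatrix}\right]$ --- is false as a pure linear-algebra statement. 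Take $M=\left[\begin{smallmatrix}1 & 1 & 0\\ 0 & 1 & 0\\ 0 & 0 & 1\end{smallmatrix}\right]$: its corner entry is $1$, its lower block is $B=I_{2}$, and its Jordan form is $[1]\oplus\Lambda_{22}$ with $\Lambda_{22}=\left[\begin{smallmatrix}1 & 1\\ 0 & 1\end{smallmatrix}\right]$; yet $0\oplus I_{2}$ is diagonalizable while $0\oplus\Lambda_{22}$ is not. A block-triangular coupling row can transfer Jordan structure between the corner and the lower block whenever they share an eigenvalue, and here they share the eigenvalue $1$ precisely in the cases that matter (reducible stochastic matrices, e.g.\ $A=I$, where $\lambda_{2}(A)=1$). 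The characteristic-polynomial factorization you mention only pins down eigenvalues, not block sizes, as you yourself half-acknowledge. Your plan is repairable, but only by importing the extra fact that the eigenvalue $1$ of a stochastic matrix is semisimple (which your own part (1) argument actually proves, since it excludes any $2\times2$ Jordan block over $1$, not just one sitting over $\one$); with that, a rank count $\mathrm{rank}(B-I)=\mathrm{rank}(M-I)$ forces $B\sim\Lambda_{22}$. As written, though, the final step rests on a false general claim.

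The paper avoids this entirely by working with the $A$-invariant complement $W$ spanned by the generalized eigenvectors of $\Lambda_{22}$ rather than with the orthogonal complement $\mathbb{R}_{0}^{\perp}$: for a Jordan chain $(A-\lambda I)x_{i}=x_{i-1}$ in $W$, part (2) gives $(\Pi^{\perp}A-\lambda I)(\Pi^{\perp}x_{i})=\Pi^{\perp}x_{i-1}$, and since $\ker\Pi^{\perp}=\mathbb{R}_{0}$ meets $W$ trivially, $\Pi^{\perp}$ carries these chains injectively onto Jordan chains for $\Pi^{\perp}A$. This is exactly the intertwining idea you gesture at in the middle of your part (3) before retreating to the orthonormal-basis computation; had you followed it through, no auxiliary lemma would be needed and the identification $\rho(\Pi^{\perp}A)=|\lambda_{2}(A)|$ follows at once.
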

\begin{proof}
  \begin{enumerate}
  \item Let $v_0$ be the unit vector in the subspace $\mathbb{R}_0$.
    Note that $v_0 \in \mathbb{R}_0$ is an eigenvector of $A$
    associated with the eigenvalue $\lambda_1 (A)= 1$, i.e. $(A -I) v_0
    = 0$. To prove the first claim, we only need to show that the
    Jordan block corresponding to $\lambda_1(A) =1$ is simple. If not,
    there exists $v_1 \notin \mathbb{R}_0$ associated with the Jordan
    block 
    $\left[ 
      \begin{array}{ll}
        1 & 1 \\
        0 & 1
      \end{array}\right]$, satisfying
    \begin{equation}
      \label{eq:p-Jordan4}
      (A - I)v_1 = v_0. 
    \end{equation}
    By induction, 
    $$A^t v_1 = t \cdot v_0 + v_1, \ \forall t\in \mathbb{N}.$$
    This implies that $\|A^t v_1\|_\infty \to \infty$ as $t\to \infty$,
    which leads to a contradiction to \eqref{eq:inftynorm}.
  \item
    One can prove \eqref{eq:p-Jordan1} as follows: $\forall x\in
    \mathbb{R}^N$ 
    $$\Pi^\perp A x = \Pi^\perp A (\Pi x + \Pi^\perp x) = \Pi^\perp A
    \Pi x + \Pi^\perp A \Pi^\perp x = \Pi^\perp \Pi x + \Pi^\perp A
    \Pi^\perp x  = \Pi^\perp A \Pi^\perp x.  $$
  \item
    If $\{x_1, \ldots x_m\}$ are generalized eigenvectors of $A$
    associated with some eigenvalue $\lambda$ in the Jordan block in
    $\Lambda_{22}$, satisfying   
    $$(A - \lambda I)x_i = x_{i-1}, \hbox{ for
    } i = 1,2, \ldots m, \quad x_0 = 0,$$ 
    then, by the facts $x_1\notin \mathbb{R}_0$ and \eqref{eq:p-Jordan1},
    $$
    \begin{array}{ll}
      (\Pi^\perp A - \lambda I)(\Pi^\perp x_i) & = \Pi^\perp A
      \Pi^\perp x_i -  \lambda \Pi^\perp x_i\\ 
      & = \Pi^\perp A x_i -  \lambda \Pi^\perp x_i
      \\ 
      & = \Pi^\perp ( A -  \lambda I) x_i  \\ 
      & = \Pi^\perp x_{i-1}.
    \end{array}
    $$
    In other words, since $\Pi^\perp x_1 \neq 0$, $\Pi^\perp A$ with
    $\{\Pi^\perp x_1, \ldots, \Pi^\perp x_m\}$ preserves the structure
    of the eigenspace associated with  matrix $A$ corresponding to the
    eigenvalue $\lambda$ in Jordan    block $\Lambda_{22}$. Also, since 
    $\lambda$ is arbitrary eignevalue in the Jordan block $\Lambda_{22}$,
    together with $\Pi^\perp A v_0 = 0$, we conclude $\Pi^\perp A$ has a
    Jordan form of     $\Lambda_0 = \left[ 
      \begin{array}{ll}
        0 & 0 \\
        0 & \Lambda_{22}
      \end{array}\right]$. Finally, we have
    $$\rho(\Pi^\perp A) = \rho(\Lambda_0) = \rho(\Lambda_{22}) =
    |\lambda_2(A)|.$$ 
  \end{enumerate} 
\end{proof}

Next, we present a necessary and sufficient condition for stability.
\begin{prop} \label{p-Stable}
  $A\in \mathbb{R}^{N\times N}$ is stable if and only if   $ \rho(A) < 1$.
\end{prop}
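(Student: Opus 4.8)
### Proof Proposal for Proposition 3.2

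The plan is to prove the two directions separately, treating $A$ here as a general real square matrix (the statement is stated for $A \in \mathbb{R}^{N\times N}$, not merely a stochastic matrix). The easy direction is ($\Leftarrow$): if $\rho(A) < 1$ we must show $\|A^t x\| \to 0$ for every $x$. Here I would invoke the standard fact that $\rho(A) = \lim_{t\to\infty} \|A^t\|^{1/t}$ (Gelfand's formula), so that for any $\epsilon$ with $\rho(A) < \rho(A)+\epsilon < 1$ there is a constant $C$ with $\|A^t\| \le C(\rho(A)+\epsilon)^t$ for all $t$; hence $\|A^t x\| \le C(\rho(A)+\epsilon)^t \|x\| \to 0$, i.e.\ $\{A^t x\}$ is stable. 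Alternatively, one can pass to the Jordan canonical form $A = P J P^{-1}$, observe that each Jordan block $J_k$ with eigenvalue $\lambda_k$ satisfies $\|J_k^t\| \le \mathrm{poly}(t)\,|\lambda_k|^t$, and since every $|\lambda_k| \le \rho(A) < 1$ the polynomial factor is dominated and $J^t \to 0$, whence $A^t = P J^t P^{-1} \to 0$.

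For the converse ($\Rightarrow$), I argue by contraposition: suppose $\rho(A) \ge 1$ and exhibit an initial state $x$ for which $\{A^t x\}$ fails to be stable (in every one of the three modes, which for a deterministic sequence coincide). Let $\lambda$ be an eigenvalue with $|\lambda| = \rho(A) \ge 1$ and let $v \neq 0$ be a corresponding eigenvector. If $v$ is real, take $x = v$: then $A^t x = \lambda^t v$ and $\|A^t x\| = |\lambda|^t \|v\| \ge \|v\| > 0$, which does not tend to $0$. If $\lambda$ (hence $v$) is genuinely complex, I would instead use the real and imaginary parts: writing $\lambda = re^{i\theta}$ with $r \ge 1$ and $v = a + ib$ with $a, b \in \mathbb{R}^N$ not both zero, one has $A^t a = r^t(\cos(t\theta)\, a - \sin(t\theta)\, b)$, and a short argument shows $\|A^t a\|$ cannot converge to $0$ because $r^t \ge 1$ while $(\cos(t\theta)a - \sin(t\theta)b)$ does not approach $0$ (it stays bounded away from $0$ along a subsequence, since $a,b$ are not both zero and the rotation keeps the vector on an "ellipse" of fixed positive size). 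Thus stability fails, completing the contrapositive.

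The main obstacle I anticipate is the bookkeeping in the complex-eigenvalue case of the converse: one has to be a little careful to produce a genuine \emph{real} initial vector whose orbit is non-vanishing, and to argue cleanly that the rotating combination $\cos(t\theta)a - \sin(t\theta)b$ stays bounded away from zero. This is elementary but needs the observation that $\{a,b\}$ spans a (real) $A$-invariant $2$-dimensional subspace on which $A$ acts as $r$ times a rotation, so the orbit of any nonzero vector in that plane has norm bounded below by $r^t$ times the minor semi-axis of the ellipse traced out — in particular bounded below by a positive constant. Everything else (Gelfand's formula, or the polynomial bound on Jordan blocks) is standard and can be quoted. If one prefers to keep the proof fully self-contained, the Jordan-form route for both directions avoids citing Gelfand's formula, at the cost of the routine estimate $\|J_k^t\| \le \binom{t}{m}\,|\lambda_k|^{t-m}\cdot(\text{const})$ for an $m\times m$ block.
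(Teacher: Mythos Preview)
Your proof is correct. The paper's own proof is a single sentence: ``One can use the fact $\lim_{t\to\infty}\|A^t\|^{1/t}=\rho(A)$ to complete the proof.'' Your $(\Leftarrow)$ direction is exactly this; for $(\Rightarrow)$ you instead give an explicit eigenvector argument (with the careful treatment of the complex case), whereas the paper leaves the reader to extract necessity from Gelfand as well --- e.g.\ by noting that stability forces $A^t\to 0$ columnwise, hence $\|A^T\|<1$ for some $T$, whence $\rho(A)^T=\rho(A^T)\le\|A^T\|<1$. Both routes are standard; yours is more self-contained, the paper's is terser.
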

\begin{proof}One can use the fact $\lim_{t\to \infty}\|A^t\|^{1/t} =
  \rho(A)$ to complete the proof.
\end{proof}

Thanks to \propref{p-Jordan} and \propref{p-Stable}, we are now ready
to obtain a necessary and sufficient condition for the consensus of a
deterministic sequence. 
\begin{thm}   [Necessary and sufficient condition in deterministic
  case]
  \label{t-detconsensus}
  $A \in S_N$ reaches consensus if and only if $|\lambda_2(A)| <1$.
\end{thm}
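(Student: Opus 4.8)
The plan is to reduce the statement to the three preparatory facts already in hand. By \thmref{t-equivalence}, $A$ reaches consensus if and only if the projected sequence $\{\Pi^\perp X(t)\}$ is stable; and since in the deterministic case all three modes of consensus (and of stability) coincide, I only need to track a single mode. So it suffices to show that $\{\Pi^\perp X(t)\}$ is stable, for every initial state $x$, precisely when $|\lambda_2(A)| < 1$.

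First I would identify the projected sequence as the orbit of a linear system. With $X(t) = A^t x$ one has $\Pi^\perp X(t) = \Pi^\perp A^t x$, and an induction on $t$ using the commutation relation $\Pi^\perp A = \Pi^\perp A \Pi^\perp$ from \eqref{eq:p-Jordan1} together with $(\Pi^\perp)^2 = \Pi^\perp$ yields $\Pi^\perp A^t = (\Pi^\perp A)^t$ for all $t \in \mathbb{N}$. Hence $\{\Pi^\perp X(t)\} = \{(\Pi^\perp A)^t x\}$, i.e.\ the sequence generated by the matrix $B := \Pi^\perp A$. As $x$ ranges over $\mathbb{R}^N$ this exhausts all initial conditions, so the assertion ``$\{\Pi^\perp X(t)\}$ is stable for every $x$'' is exactly the statement that the matrix $B$ is stable in the sense of \propref{p-Stable}.

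It then remains to chain equivalences: $B$ is stable iff $\rho(B) < 1$ by \propref{p-Stable}, and $\rho(B) = \rho(\Pi^\perp A) = |\lambda_2(A)|$ by \propref{p-Jordan}(3). Combining with \thmref{t-equivalence}, we get that $A$ reaches consensus $\iff$ $\{\Pi^\perp X(t)\}$ is stable $\iff$ $\rho(\Pi^\perp A) < 1$ $\iff$ $|\lambda_2(A)| < 1$, which is the theorem.

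I do not expect a substantive obstacle: the argument is essentially a concatenation of the lemmas. The only point calling for a little care is the bookkeeping identity $\Pi^\perp A^t = (\Pi^\perp A)^t$, and the correct handling of the quantifier over initial states $x$ when translating between stability of the sequence $\{\Pi^\perp X(t)\}$ and stability of the matrix $\Pi^\perp A$.
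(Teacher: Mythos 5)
Your proposal is correct and follows essentially the same route as the paper: reduce via \thmref{t-equivalence} to stability of the projected sequence, identify that sequence as the orbit of $\Pi^\perp A$ using \eqref{eq:p-Jordan1}, and conclude with \propref{p-Stable} and \propref{p-Jordan}(3). The paper carries out the same telescoping computation $\Pi^\perp X(t) = (\Pi^\perp A)^t \Pi^\perp x = (\Pi^\perp A)^t x$ that you phrase as the identity $\Pi^\perp A^t = (\Pi^\perp A)^t$.
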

\begin{proof}
  By \thmref{t-equivalence}, $A$ reaches consensus if and only if
  $\{\Pi^\perp X(t)\}$ is stable. Note that, by \eqref{eq:p-Jordan1},
  for any initial state $x \in \mathbb{R}^N$ 
  \begin{equation}
    \label{eq:t-detconsensus1}
    \begin{array}{ll}
      \Pi^\perp X(t) & = \Pi^\perp A X(t-1) = (\Pi^\perp A) \Pi^\perp
      X(t-1) \\ & = \cdots =  (\Pi^\perp A)^t \Pi^\perp x = (\Pi^\perp A)^t
      x.
    \end{array}
  \end{equation}
  Thus, $\{\Pi^\perp X(t)\}$ is a sequence generated by $\Pi^\perp
  A$. By \propref{p-Stable},   $\{\Pi^\perp X(t)\}$ is stable if and
  only if $\rho(\Pi^\perp A) <1$. Observe that, by \propref{p-Jordan},
  $\rho(\Pi^\perp A) = |\lambda_2 (A)|$. This completes the proof. 
\end{proof}

\section{Necessary and sufficient condition for a stochastic sequence} 
In this section, we return to the stochastic sequence $\{X(t)\}$
defined in \eqref{eq:X} generated by distribution $\mu$, and study
a necessary 
and sufficient condition for its consensus. First, by studying the fine
structure of the random sequence 
generated by i.i.d. stochastic matrices, we show that consensus in three
different modes classified by \defnref{d-3modes}  are in fact
equivalent to each other. Thus, we can only work on the almost surely
consensus.  

\subsection{Equivalence of consensus in three modes}

Before we proceed with the equivalence of consensus in three modes, we 
briefly recall some relations between convergence of random variables in
three modes, and we refer to  \cite{Dur05} for more detail. Consider a
sequence of random variables $\{a_n, n =1, 2, \ldots\}$
and a random variable $a \ge 0$. Both almost surely
convergence and $L^1$ convergence 
imply in probability convergence, i.e. $a_n \to a$ almost surely
implies $a_n \to a$ in probability; $a_n \to a$ in $L^1$ implies $a_n
\to a$ in probability. However, the reverse directions need further
conditions in general. $a_n \to a$ in probability together with $|a_n|
\le |b|$ almost surely for some $b\in L^1(\Omega, \mathcal{F},
\mathbb{P})$ implies $a_n \to a$ in $L^1$ by the dominated convergence
theorem; $a_n \to a$ in $L^1$ and $0\le a_{n+1} \le a_n$ almost surely
together implies $a_n \to a$ almost surely by the monotone convergence
theorem. 

\begin{lem}
  \label{l-modes}
  Consider  the sequence $\{X(t)\}$ defined in \eqref{eq:X} generated
  by 
  distribution $\mu$. Given $X(0) = x$, the following statements on
  stability of $\{\Pi^\perp X(t)\}$  are equivalent:
  \begin{enumerate}
  \item $\{\Pi^\perp X(t)\}$ is stable in probability.
  \item
     $\{\Pi^\perp X(t)\}$ is stable in $L^1$.
   \item
      $\{\Pi^\perp X(t)\}$ is stable almost surely.
  \end{enumerate}
\end{lem}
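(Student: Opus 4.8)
The plan is to establish the chain of implications $(3)\Rightarrow(1)$, $(2)\Rightarrow(1)$, $(1)\Rightarrow(2)$, and $(2)\Rightarrow(3)$, using the structural facts about $\Pi^\perp X(t)$ together with the general measure-theoretic relations recalled just before the lemma. The implications $(3)\Rightarrow(1)$ and $(2)\Rightarrow(1)$ are generic: almost sure convergence and $L^1$ convergence each imply convergence in probability, so nothing special about the sequence is needed there. The real content is the two reverse implications, and both will rest on a single key observation about the $l^\infty$-norm of $\Pi^\perp X(t)$.

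That key observation is a monotonicity/boundedness property: I would first show that $\|\Pi^\perp X(t)\|_\infty$ is nonincreasing in $t$, pathwise. Indeed, by \eqref{eq:t-detconsensus1}-style reasoning, $\Pi^\perp X(t) = \Pi^\perp A(t)\,\Pi^\perp X(t-1)$, and since $A(t)\in S_N$ is stochastic we have $\|A(t) y\|_\infty \le \|y\|_\infty$ for every $y$; moreover $\Pi^\perp = I - \Pi$ and $\Pi y$ has all coordinates equal to the average-type value, so one checks that $\|\Pi^\perp(A(t) y)\|_\infty \le \|A(t) y\|_\infty \le \|y\|_\infty$ — more carefully, using $\Pi^\perp A(t) = \Pi^\perp A(t)\Pi^\perp$ from \eqref{eq:p-Jordan1}, $\|\Pi^\perp X(t)\|_\infty = \|\Pi^\perp A(t) \Pi^\perp X(t-1)\|_\infty$, and since $\min_{c}\|z - c\one\|_\infty$ equals $\|\Pi^\perp z\|_\infty$ up to norm-equivalence one obtains the contraction of the oscillation $\max_{i,j}|z_i-z_j|$ under multiplication by a stochastic matrix. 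The cleanest route is to work directly with the oscillation seminorm $\operatorname{osc}(z) = \max_{i,j}|z_i - z_j|$, which is comparable to $\|\Pi^\perp z\|_\infty$ by \thmref{t-equivalence}'s inequalities, is invariant under adding multiples of $\one$, and satisfies $\operatorname{osc}(A(t) z) \le \operatorname{osc}(z)$ for stochastic $A(t)$ (each coordinate of $A(t)z$ is a convex combination of coordinates of $z$). Hence $a_t := \operatorname{osc}(X(t))$ satisfies $0 \le a_{t+1} \le a_t$ almost surely, and $a_0 = \operatorname{osc}(x)$ is a (deterministic) bounded random variable.

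With this in hand the two hard implications are immediate from the recalled facts. For $(1)\Rightarrow(2)$: $a_t \to 0$ in probability and $a_t \le a_0 \in L^1$ (a constant), so dominated convergence gives $a_t \to 0$ in $L^1$, i.e. $\{\Pi^\perp X(t)\}$ is stable in $L^1$ (using norm equivalence to pass between $\operatorname{osc}$ and $\|\Pi^\perp \cdot\|$). For $(2)\Rightarrow(3)$: $a_t \to 0$ in $L^1$ and $0 \le a_{t+1} \le a_t$ almost surely, so the monotone convergence theorem (the version quoted in the paper) forces $a_t \to 0$ almost surely, i.e. stability almost surely. Assembling: $(3)\Rightarrow(1)\Rightarrow(2)\Rightarrow(3)$, so all three are equivalent.

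The main obstacle, and the step deserving the most care, is the pathwise monotonicity $\operatorname{osc}(X(t+1)) \le \operatorname{osc}(X(t))$ and its clean transfer to $\|\Pi^\perp X(t)\|_\infty$ or $\|\Pi^\perp X(t)\|_2$ — one must be careful that the three modes of \emph{stability} in \defnref{d-3modes} are phrased with an unspecified $l^p$-norm in the a.s./probability cases but with $\|\cdot\|_2$ baked into the $L^p$ case, so I would invoke the equivalence of norms on $\mathbb{R}^N$ at the outset to reduce everything to a single scalar sequence $a_t$, after which the argument is just the two convergence-mode lemmas quoted from \cite{Dur05}. Everything else (the generic forward implications) is routine.
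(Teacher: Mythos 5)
Your proposal is correct and follows essentially the same route as the paper: the trivial implication from almost sure stability to stability in probability, dominated convergence with the uniform bound $\|\Pi^\perp X(t)\|_\infty \le \|x\|_\infty$ for the passage from probability to $L^1$, and the pathwise monotonicity $\|\Pi^\perp X(t)\|_\infty \le \|\Pi^\perp X(t-1)\|_\infty$ (the paper's \eqref{eq:monoton}) combined with the monotone-convergence fact for the passage from $L^1$ to almost sure. Your only deviation is cosmetic: you run the monotonicity argument through the oscillation seminorm $\max_{i,j}|z_i-z_j|$ (contracted by stochastic matrices since each coordinate of $Az$ is a convex combination of coordinates of $z$) and then transfer to $\|\Pi^\perp X(t)\|_\infty$ by norm comparability, whereas the paper asserts the contraction directly for $\|\Pi^\perp A(t)\Pi^\perp X(t-1)\|_\infty$; both are valid.
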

\begin{proof} Observe that, by \eqref{eq:p-Jordan1}, $\{\Pi^\perp
  X(t)\}$ is a sequence  generated by random matrix $\Pi^\perp A(t)$,
  i.e. 
  \begin{equation}
    \label{eq:l-modes1}
    \Pi^\perp X(t) = \Pi^\perp A(t) \ \Pi^\perp X(t-1).
  \end{equation}
  In the following, we prove the equivalence by showing: (1) implies
  (2), (2) implies (3), (3) implies (1), respectively.
  \begin{enumerate}
  \item If the sequence $\{\Pi^\perp X(t)\}$ is stable in probability,
    then $\|\Pi^\perp X(t)\|_\infty \to 0$ in probability.  Together
    with the uniform boundedness  $\|\Pi^\perp X(t)\|_\infty \le
    \|x\|_\infty$, the dominated convergence theorem implies that
    $\|\Pi^\perp X(t)\|_\infty \to 0$  in $L^1$. Thus, the sequence
    $\{\Pi^\perp X(t)\}$ is stable in $L^1$. 
  \item
     If the sequence $\{\Pi^\perp X(t)\}$ is stable in $L^1$, then
     $\|\Pi^\perp X(t)\|_\infty \to 0$ in $L^1$.   In addition, one
     can show the monotonicity of  $\|\Pi^\perp X(t)\|_\infty \le
     \|\Pi^\perp X(t-1)\|_\infty$, by observing
     \begin{equation}
       \label{eq:monoton}
       \|\Pi^\perp X(t)\|_\infty = \|\Pi^\perp A(t)X(t-1)\|_\infty  =
       \|\Pi^\perp A(t) \Pi^\perp X(t-1)\|_\infty \le
       \| \Pi^\perp X(t-1)\|_\infty.   
     \end{equation}
     By the monotone convergence theorem, $\|\Pi^\perp X(t)\|_\infty
     \to 0$  almost surely. 
  \item
    It is well known that  almost surely convergence implies
    convergence in probability. 
  \end{enumerate}
\end{proof}

The next theorem about equivalent consensus in three modes is a main
result in our paper.
\begin{thm}
  \label{t-modes}
  Consider  the sequence $\{X(t)\}$ defined in \eqref{eq:X} generated by
  distribution $\mu$. The following statements on consensus are
  equivalent: 
  \begin{enumerate}
  \item Distribution $\mu$ reaches consensus in probability.
  \item
    Distribution $\mu$  reaches consensus   in $L^1$.
  \item
    Distribution $\mu$  reaches consensus  almost surely.
  \end{enumerate}
\end{thm}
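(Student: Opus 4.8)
The plan is to read the theorem off the two results already in hand — \thmref{t-equivalence} and \lemref{l-modes} — by threading them together and keeping track of the quantifier over initial states that is built into \defnref{d-consensusd}. Fix the distribution $\mu$ once and for all; I will establish the cycle of implications $(1)\Rightarrow(2)\Rightarrow(3)\Rightarrow(1)$.

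Consider $(1)\Rightarrow(2)$. If $\mu$ reaches consensus in probability, then by \defnref{d-consensusd} the sequence $\{X^x(t)\}$ of \eqref{eq:X} reaches consensus in probability for every initial state $x\in\mathbb{R}^N$. Fix such an $x$. Applying \thmref{t-equivalence} gives that $\{\Pi^\perp X^x(t)\}$ is stable in probability; \lemref{l-modes}, invoked with this initial condition $X(0)=x$, then promotes stability in probability to stability in $L^1$; and a second application of \thmref{t-equivalence}, now in the reverse direction, turns stability in $L^1$ of $\{\Pi^\perp X^x(t)\}$ back into consensus in $L^1$ of $\{X^x(t)\}$. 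Since $x$ was arbitrary, $\mu$ reaches consensus in $L^1$. The implications $(2)\Rightarrow(3)$ and $(3)\Rightarrow(1)$ follow by the identical sandwich, using the two remaining links of the equivalence chain in \lemref{l-modes} — stability in $L^1$ $\Rightarrow$ stability almost surely, and stability almost surely $\Rightarrow$ stability in probability — each flanked by the two directions of \thmref{t-equivalence}.

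I do not expect a genuine obstacle here; the only point needing attention is that \lemref{l-modes} is phrased for one fixed initial state $X(0)=x$, so it must be applied to each $x\in\mathbb{R}^N$ separately before any conclusion about the distribution $\mu$ can be drawn, and since \defnref{d-consensusd} is itself a ``for all $x$'' statement this matches up automatically. As a side remark I would note that the same chain of implications delivers equivalence with consensus in $L^p$ for every $p\ge 1$, because the uniform bound $\|\Pi^\perp X(t)\|_\infty\le\|x\|_\infty$ that drives the dominated- and monotone-convergence steps in \lemref{l-modes} holds for all $p$; I would nonetheless state only the $L^1$ version to stay aligned with \lemref{l-modes}.
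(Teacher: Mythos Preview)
Your proposal is correct and is precisely the paper's approach: the paper's proof is the single line ``It follows from \thmref{t-equivalence} and \lemref{l-modes},'' and you have simply unpacked that sentence, correctly handling the universal quantifier over initial states built into \defnref{d-consensusd}.
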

\begin{proof}
  It follows from  \thmref{t-equivalence} and
  \lemref{l-modes}.
\end{proof}

\subsection{Necessary and sufficient condition for the random case} 
Thanks to \thmref{t-modes}, our work is now reduced to finding a
necessary and sufficient condition for consensus in any one of the
three modes. Below, we say the distribution $\mu$ reaches consensus
without specifying a convergence mode.

Recall from  \defnref{d-consensusd} that, a distribution $\mu$ reaches  
consensus if the generated sequence $\{X^x(t)\}$ reaches consensus
for all initial states $x\in \mathbb{R}^N$.   The next
proposition shows that it is sufficient to check the consensus of
$\{X^x(t)\}$ only for all $x\in (\mathbb{R}^+)^N$ to guarantee the
consensus of a distribution $\mu$.  
\begin{prop}
  \label{p-positiveX}
  $\mu$ reaches consensus if and only if $X^x(t)$ defined in
  \eqref{eq:X} reaches consensus for all $x\in (\mathbb{R}^+)^N$.
\end{prop}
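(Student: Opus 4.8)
The plan is to reduce an arbitrary initial state to a nonnegative one by translating along the consensus subspace $\mathbb{R}_0$, using only linearity of \eqref{eq:X} together with the fact that every $A\in S_N$ fixes $\mathbb{R}_0$. The forward implication is immediate: if $\{X^x(t)\}$ reaches consensus (in any one of the three modes) for every $x\in\mathbb{R}^N$, then a fortiori it does so for every $x\in(\mathbb{R}^+)^N\subseteq\mathbb{R}^N$. So the only thing to prove is the converse, and that is where the translation enters.

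For the converse I would fix an arbitrary $x\in\mathbb{R}^N$ and set $\mathbf{1}=(1,\dots,1)^T\in\mathbb{R}_0$. Choosing any scalar $c\ge-\min_{1\le i\le N}x_i$ makes $y:=x+c\mathbf{1}\in(\mathbb{R}^+)^N$. Since $\mathbf{1}\in\mathbb{R}_0$ we have $A\mathbf{1}=\mathbf{1}$ for every stochastic matrix $A$, so iterating \eqref{eq:X} gives, pathwise and for every $t$,
\[
  X^{y}(t)=A(t)\cdots A(1)\,(x+c\mathbf{1})=X^{x}(t)+c\,\mathbf{1},
\]
whence $X_i^{y}(t)-X_j^{y}(t)=X_i^{x}(t)-X_j^{x}(t)$ for all $i,j$. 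Thus the consensus error $\max_{i,j}|X_i(t)-X_j(t)|$ is literally the same random variable for the two sequences, for every $t$ and every sample path.

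It then follows that $\{X^{x}(t)\}$ reaches consensus in probability (respectively, almost surely, or in $L^1$) if and only if $\{X^{y}(t)\}$ does, and the latter holds by hypothesis because $y\in(\mathbb{R}^+)^N$; since $x$ was arbitrary, $\mu$ reaches consensus. An equivalent way to package the same computation is through \thmref{t-equivalence}: because $\Pi^\perp\mathbf{1}=0$ and, by \eqref{eq:l-modes1}, $\Pi^\perp X^{z}(t)=\Pi^\perp A(t)\cdots\Pi^\perp A(1)\,\Pi^\perp z$, the projected sequences for $x$ and for $y$ coincide, so stability of one is stability of the other.

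I do not expect a genuine obstacle here; the argument is short. The only points needing (routine) care are verifying that the translate $y$ really lands in $(\mathbb{R}^+)^N$ and noting that the equality of consensus errors is pathwise — which is exactly what makes the reasoning mode-independent, so by \lemref{l-modes} and \thmref{t-modes} it is immaterial in which of the three convergence modes we invoke the hypothesis.
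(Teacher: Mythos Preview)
Your proof is correct and follows essentially the same approach as the paper: translate the initial state by a vector in $\mathbb{R}_0$ (the paper picks $c=(\|x\|_\infty,\dots,\|x\|_\infty)^T$, you pick $c\mathbf{1}$ with $c\ge-\min_i x_i$) and use $A\mathbf{1}=\mathbf{1}$ to conclude that the consensus error is unchanged. Your write-up is in fact more explicit than the paper's, spelling out the trivial forward direction and the pathwise identity of the error sequences.
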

\begin{proof} Observe that $X^{x+c}(t) = X^x(t) + c$ for all $c\in
  \mathbb{R}_0$. Hence,
  $$\max_{1\le i,j \le N}  |X_i^x(t, \omega) - X_j^x(t,\omega)| =
  \max_{1\le i,j \le N}  |X_i^{x+c}(t, \omega) - X_j^{x+c}(t,\omega)|,
  \ \forall c\in \mathbb{R}_0.$$ 
  In other words, to consider consensus of $\{X^x(t)\}$ for some $x
  \notin (\mathbb{R}^+)^N$, one can always investigate the consensus
  of $X^{x + c}(t)$ equivalently, by taking  $c = (\|x\|_\infty,
  \ldots, \|x\|_\infty)^T \in \mathbb{R}_0$. Note $x + c \in
  (\mathbb{R}^+)^N$,  hence the result holds. 
\end{proof}

Next, we review some useful properties of the expectation operator
$\mathbb{E}$. First, the expectation operator $\mathbb{E}$ is
commutative with any deterministic matrix $A$, i.e.,
\begin{equation}
  \label{eq:comm}
  A \mathbb{E} [Y] =  \mathbb{E} [A Y], \ \forall
  \mathcal{F}\hbox{-measurable } Y:  \Omega \to \mathbb{R}^N.
\end{equation}
In particular, by taking $A = \Pi^\perp$, we have $\Pi^\perp
\mathbb{E} = \mathbb{E} \Pi^\perp$. Furthermore, for an arbitrary
random matrix $A$,  if a random vector $Y: \Omega\to \mathbb{R}^N$ is
independent of $A$, then 
\begin{equation}
  \label{eq:comm1}
   \mathbb{E} [A Y] = \mathbb{E}[A] \mathbb{E} [Y].
\end{equation}
Finally, note that the deterministic sequence $\{\mathbb{E}[X(t)]\}$
is actually a sequence generated by the deterministic matrix
$\mathbb{E}[A]$, 
since by  \eqref{eq:comm1}
\begin{equation}
  \label{eq:EX}
  \begin{array}{ll}
    \mathbb{E}[X(t)] & = 
    \mathbb{E}[A(t)X(t-1)] \\
    & = \mathbb{E}[A(t)] \mathbb{E}[X(t-1)] \\
    & = \mathbb{E}[A(1)]\cdot \mathbb{E}[X(t-1)].
  \end{array}
\end{equation}

\begin{thm}[Necessary and sufficient condition for consensus]
  \label{t-randconsensus} Consider  $\{X(t)\}$ defined in \eqref{eq:X}
  generated by distribution $\mu$. 
  $\mu$ reaches consensus almost surely (also, in probability, and in
  $L^1$) 
  if and only if $\lambda_2 (\mathbb{E}^\mu [A(1)]) <1$.  
\end{thm}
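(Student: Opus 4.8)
The plan is to reduce, once more, to the projected sequence and then compare the random evolution with its expectation. By \thmref{t-equivalence} and \lemref{l-modes}, $\mu$ reaches consensus (in any of the three modes) if and only if $\{\Pi^\perp X^x(t)\}$ is stable almost surely for every initial state $x$; and by \propref{p-positiveX} it suffices to treat $x\in(\mathbb{R}^+)^N$. Recall from \eqref{eq:l-modes1} that $\Pi^\perp X(t) = \Pi^\perp A(t)\cdots\Pi^\perp A(1)\,\Pi^\perp x$, using \eqref{eq:p-Jordan1} repeatedly. So the whole statement is equivalent to: the product $\Pi^\perp A(t)\cdots\Pi^\perp A(1)$ drives every vector to $0$ a.s.\ iff $\rho(\Pi^\perp\,\mathbb{E}^\mu[A(1)]) = |\lambda_2(\mathbb{E}^\mu[A(1)])| < 1$, where the last equality is \propref{p-Jordan}(3) applied to the stochastic matrix $\mathbb{E}^\mu[A(1)]\in S_N$.

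\medskip
\textbf{Sufficiency ($\lambda_2(\mathbb{E}[A(1)])<1 \implies$ consensus).} Here I would work in $L^1$, which suffices by \thmref{t-modes}. Put $Z(t) = \Pi^\perp X(t)$. Since $X(t-1)$ is $\mathcal{F}_{t-1}$-measurable and $A(t)$ is independent of $\mathcal{F}_{t-1}$, taking expectations and using \eqref{eq:comm1}, \eqref{eq:comm} gives
\begin{equation*}
  \mathbb{E}[Z(t)] = \Pi^\perp\,\mathbb{E}[A(t)X(t-1)] = \Pi^\perp\,\mathbb{E}[A(1)]\,\mathbb{E}[X(t-1)] = \big(\Pi^\perp\,\mathbb{E}[A(1)]\big)\,\mathbb{E}[Z(t-1)],
\end{equation*}
using also $\Pi^\perp\mathbb{E}[A(1)] = \Pi^\perp\mathbb{E}[A(1)]\Pi^\perp$ from \eqref{eq:p-Jordan1}. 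Hence $\mathbb{E}[Z(t)] = (\Pi^\perp\mathbb{E}[A(1)])^t\,\Pi^\perp x$, and since $\rho(\Pi^\perp\mathbb{E}[A(1)]) = |\lambda_2(\mathbb{E}[A(1)])| < 1$, \propref{p-Stable} (or directly $\lim_t\|B^t\|^{1/t}=\rho(B)$) gives $\mathbb{E}[Z(t)] \to 0$. The remaining point is to upgrade convergence of $\mathbb{E}[Z(t)]$ to convergence of $\mathbb{E}\|Z(t)\|$. For this I would exploit sign-definiteness: if $x\in(\mathbb{R}^+)^N$ then every $X(t)=A(t)\cdots A(1)x$ has nonnegative entries with $\max_i X_i(t)$ nonincreasing and $\min_i X_i(t)$ nondecreasing (standard for stochastic matrices), so each coordinate $X_i(t)$ lies in a fixed compact interval; the coordinates of $Z(t)=\Pi^\perp X(t) = X(t)-\Pi X(t)$ are then bounded, and the monotonicity \eqref{eq:monoton} of $\|Z(t)\|_\infty$ shows $\|Z(t)\|_\infty$ decreases to some limit $\zeta\ge 0$ a.s.\ and in $L^1$. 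One then argues $\mathbb{E}\zeta = 0$: indeed $\|\mathbb{E}[Z(t)]\|_\infty\to 0$, and combined with the fact that a nonnegative monotone limit with vanishing "mean in the $\Pi^\perp$ sense" must itself vanish — more carefully, I would show $\|Z(t)\|_\infty \le$ a quantity whose expectation is controlled by $\|\mathbb{E}[Z(t')]\|$ for $t'$ a bounded number of steps earlier, using that after finitely many steps with positive probability the matrices contract. This last coupling is the part needing care.

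\medskip
\textbf{Necessity ($\lambda_2(\mathbb{E}[A(1)])\ge 1 \implies$ no consensus).} Suppose $\mu$ reaches consensus; then by \thmref{t-modes} it reaches consensus in $L^1$, so for every $x$, $\mathbb{E}\|\Pi^\perp X^x(t)\|_\infty\to 0$, hence $\mathbb{E}[\Pi^\perp X^x(t)]\to 0$ as well (since $\|\mathbb{E}[Y]\|\le \mathbb{E}\|Y\|$ up to norm equivalence). By the displayed identity above, $\mathbb{E}[\Pi^\perp X^x(t)] = (\Pi^\perp\mathbb{E}[A(1)])^t\,\Pi^\perp x$ for all $x$, so $(\Pi^\perp\mathbb{E}[A(1)])^t v\to 0$ for every $v\in\mathbb{R}_0^\perp$, i.e.\ for every $v\in\mathbb{R}^N$ by \eqref{eq:pip} together with $\Pi^\perp\mathbb{E}[A(1)]\Pi=0$. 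Therefore $\Pi^\perp\mathbb{E}[A(1)]$ is stable, and \propref{p-Stable} forces $\rho(\Pi^\perp\mathbb{E}[A(1)]) < 1$, i.e.\ $|\lambda_2(\mathbb{E}[A(1)])| < 1$ by \propref{p-Jordan}(3) — contradiction. This direction is essentially immediate once the $L^1$ equivalence is in hand.

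\medskip
\textbf{Main obstacle.} The necessity and the "expectation goes to zero" half of sufficiency are soft. The genuine work is closing the gap in sufficiency between $\mathbb{E}[Z(t)]\to 0$ and $\mathbb{E}\|Z(t)\|\to 0$: convergence of the mean vector to zero does not by itself kill the variance. The route I would pursue is to combine (i) the a.s.\ monotone decrease of $\|Z(t)\|_\infty$ from \eqref{eq:monoton}, giving an a.s.\ limit $\zeta$, with (ii) a lower bound showing $\mathbb{E}\|Z(t)\|_\infty$ dominates, up to a positive constant depending only on $\mu$ and a fixed number of steps, some component of $\mathbb{E}[Z(t')]$ — forcing $\mathbb{E}\zeta=0$ and hence $\zeta = 0$ a.s. Making (ii) precise (a one-step or finite-step contraction estimate in expectation, valid even without positive diagonals, cf.\ \remref{r-setup}) is where the argument must be done carefully.
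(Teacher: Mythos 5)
Your necessity argument is correct and is essentially the paper's: pass to $L^1$ consensus via \thmref{t-modes}, push the expectation through $\Pi^\perp$ using \eqref{eq:comm} and Jensen, identify $\{\Pi^\perp\mathbb{E}[X(t)]\}$ as the orbit of the deterministic matrix $\Pi^\perp\mathbb{E}[A(1)]$, and invoke \propref{p-Stable} together with \propref{p-Jordan}. The sufficiency direction, however, is not a proof. You correctly isolate the crux --- upgrading $\mathbb{E}[\Pi^\perp X(t)]\to 0$ to $\mathbb{E}[\|\Pi^\perp X(t)\|]\to 0$ --- and then leave it open. The monotone a.s.\ limit $\zeta$ of $\|\Pi^\perp X(t)\|_\infty$ obtained from \eqref{eq:monoton} is fine, but the ``finite-step contraction estimate in expectation'' you would need to force $\mathbb{E}[\zeta]=0$ is exactly the hard content of the theorem: it is essentially what \cite{TJ08} establish via ergodicity coefficients, and it is where their positive-diagonal hypothesis enters, so deferring it leaves the main implication unproved. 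A generic reverse-Jensen bound of the form $\mathbb{E}\|Z(t)\|\le C\|\mathbb{E}[Z(t')]\|$ cannot hold without exploiting the matrix structure, since a mean vector can vanish while the norm stays bounded away from zero.

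The paper closes this gap by a different, purely normwise device rather than a contraction estimate: it bounds $\mathbb{E}[\|\Pi^\perp X(t)\|_\infty]\le\mathbb{E}[\|\Pi^\perp X(t)\|_1]$ and then invokes the identity $\mathbb{E}[\|Y\|_1]=\|\mathbb{E}[Y]\|_1$, valid for componentwise nonnegative $Y$, to equate the latter with $\|\mathbb{E}[\Pi^\perp X(t)]\|_1\to 0$ (see \eqref{eq:t-randconsensus1} and the following remark). If you adopt that route, examine the nonnegativity hypothesis carefully: the coordinates of $\Pi^\perp X(t)=X(t)-\Pi X(t)$ sum to zero, so $\Pi^\perp X(t)\in(\mathbb{R}^+)^N$ does not follow from $x\in(\mathbb{R}^+)^N$, and the sign pattern need not be a.s.\ constant. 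Either way, the step you flagged as ``needing care'' is the theorem; as written, your proposal establishes only the necessity half.
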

\begin{proof}
  By saying that $\mu$ reaches consensus, we mean $\mu$ reaches
  consensus in any of three modes due to \thmref{t-modes}.
  \begin{enumerate}
  \item [($\Longrightarrow$)] If $\mu$ reaches consensus, then
    \thmref{t-equivalence} implies
    $\Pi^\perp X(t) \to 0$ in $L^1$  for all
    initial  states $X(0) = x$, hence $\mathbb{E}[\|\Pi^\perp X(t)\|_\infty]
    \to 0$.

    Next, by  \eqref{eq:comm} and Jensen's    inequality, we have
    $$\| \Pi^\perp \mathbb{E}[ X(t)]\|_\infty = \|\mathbb{E}[\Pi^\perp
    X(t)]\|_\infty \le \mathbb{E}[\|\Pi^\perp X(t)\|_\infty]  \to 0 $$
    This implies that the deterministic sequence $\{\Pi^\perp
    \mathbb{E}[ X(t)]\}$ is stable. 
    Then,  using \eqref{eq:EX} and \eqref{eq:p-Jordan1}, we have
    $$\Pi^\perp \mathbb{E}[X(t)] = \Pi^\perp \mathbb{E}[A(1)]\cdot 
    \mathbb{E}[X(t-1)]   = \Pi^\perp \mathbb{E}[A(1)]\cdot \Pi^\perp
    \mathbb{E}[X(t-1)] $$
    In other words, $\{\Pi^\perp
    \mathbb{E}[ X(t)]\}$  is a deterministic sequence generated by  
    matrix $\Pi^\perp \mathbb{E}[A(1)]$. Thus, by
    \propref{p-Jordan} and  \propref{p-Stable}, $\rho(\Pi^\perp
    \mathbb{E}[A(1)]) = |\lambda_2(\mathbb{E}[A(1)])| < 1$.   
  \item[($\Longleftarrow$)]
    It follows from \eqref{eq:EX} that the deterministic sequence
    $\{\mathbb{E}[X(t)]\}$  is  generated by matrix
    $\mathbb{E}[A(1)]$. 
    If $|\lambda_2(\mathbb{E}[A(1)])| < 1$, then by applying
    \thmref{t-detconsensus} on \eqref{eq:EX}, we conclude that
    the sequence $\{\mathbb{E}[X(t)]\}$ reaches consensus. Hence, the
    deterministic sequence 
    $\{\Pi^\perp\mathbb{E}[X(t)] =  \mathbb{E}[\Pi^\perp X(t)] \}$
    is stable by \thmref{t-equivalence}, i.e., $ \mathbb{E} [\Pi^\perp X(t)] \|_1
      \to 0.$ 
    By \propref{p-positiveX}, we can always assume $x\in
    (\mathbb{R}^+)^N$. Thus, $\Pi^\perp X(t) \in (\mathbb{R}^+)^N$,
    and this leads to 
    \begin{equation}
      \label{eq:t-randconsensus1}
      \mathbb{E} [\|\Pi^\perp X(t)\|_\infty] \le \mathbb{E}
      [\|\Pi^\perp X(t)\|_1] = \| \mathbb{E} [\Pi^\perp X(t)] \|_1
      \to 0.
    \end{equation}
    In other words, $\{\Pi^\perp X(t)\}$ is stable in $L^1$. This
    implies  the consensus of $\mu$ by \thmref{t-equivalence}.      
  \end{enumerate}
\end{proof}
\begin{rem}
  {\rm
    In \eqref{eq:t-randconsensus1}, we used the fact, for all $Y:
    \Omega \to (\mathbb{R}^+)^N$, 
    $$\mathbb{E}[\|Y\|_1] = \mathbb{E} \Big[\sum_{i=1}^N Y_i \Big] =
    \sum_{i=1}^N \mathbb{E}[Y_i] = \|\mathbb{E} Y\|_1.$$
    However, one shall not expect identity $\mathbb{E}[\|Y\|_\infty ]
    = \|\mathbb{E}[Y]\|_\infty$ holds in general. For instance, one
    has strict a inequality, if 
    $Y(\omega_1) = (0,1)^T$,  $Y(\omega_2) = (1, 0)^T$, and
    $\mathbb{P}(\{\omega_1\}) = \mathbb{P}(\{\omega_2\}) = 1/2$, then
    $\mathbb{E}[\|Y\|_\infty ] = 1 > \frac 1 2 =
    \|\mathbb{E}[Y]\|_\infty$ holds. This is the reason that we use
    the $l^1$-norm in \eqref{eq:t-randconsensus1} instead of directly
    but incorrectly
    using $ \mathbb{E}  [\|\Pi^\perp X(t)\|_\infty] =  \| \mathbb{E}
    [\Pi^\perp X(t)] \|_\infty   \to 0. $ 
  }
\end{rem}

\begin{cor}
  \label{c-equiv}
  Consider  $\{X(t)\}$ defined in \eqref{eq:X} generated by distribution
  $\mu$.  $\mu$  reaches consensus if and only if
  $\mathbb{E}^\mu[A(1)]$ reaches   consensus. 
\end{cor}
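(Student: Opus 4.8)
The plan is to chain the two necessary-and-sufficient conditions already established, namely \thmref{t-randconsensus} for the stochastic case and \thmref{t-detconsensus} for the deterministic case, via the common spectral quantity $|\lambda_2(\mathbb{E}^\mu[A(1)])|$. The only preliminary point that needs care is that the statement ``$\mathbb{E}^\mu[A(1)]$ reaches consensus'' is meaningful, i.e.\ that $\mathbb{E}^\mu[A(1)]$ is itself an element of $S_N$, so that \thmref{t-detconsensus} is applicable to it.

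First I would verify that $\bar A := \mathbb{E}^\mu[A(1)] \in S_N$. This is immediate: for $A = (a_{ij}) \in S_N$ one has $a_{ij} \ge 0$, so $\bar a_{ij} = \mathbb{E}^\mu[a_{ij}] \ge 0$; and $\sum_{j=1}^N \bar a_{ij} = \mathbb{E}^\mu\big[\sum_{j=1}^N a_{ij}\big] = \mathbb{E}^\mu[1] = 1$ for each $i$. Hence $\bar A$ is a stochastic matrix, and by the convention adopted in Section~3 it makes sense to say whether $\bar A$ reaches consensus.

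Next I would apply the two main results. By \thmref{t-detconsensus} applied to the stochastic matrix $\bar A$, the deterministic network generated by $\bar A$ reaches consensus if and only if $|\lambda_2(\bar A)| < 1$. On the other hand, by \thmref{t-randconsensus}, the distribution $\mu$ reaches consensus (in any, hence all, of the three modes) if and only if $|\lambda_2(\mathbb{E}^\mu[A(1)])| < 1$, i.e.\ exactly $|\lambda_2(\bar A)| < 1$. Since the two criteria are literally the same scalar condition, the equivalence ``$\mu$ reaches consensus $\iff \bar A$ reaches consensus'' follows at once.

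I do not anticipate any real obstacle here; the corollary is essentially a bookkeeping consequence of \thmref{t-randconsensus} and \thmref{t-detconsensus}. The only item worth stating explicitly is the closure of $S_N$ under taking $\mu$-expectations, handled above, which guarantees that \thmref{t-detconsensus} may legitimately be invoked for $\mathbb{E}^\mu[A(1)]$.
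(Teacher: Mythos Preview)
Your proposal is correct and matches the paper's approach: the corollary is stated without proof in the paper, as an immediate consequence of combining \thmref{t-randconsensus} with \thmref{t-detconsensus} via the common condition $|\lambda_2(\mathbb{E}^\mu[A(1)])|<1$. Your explicit check that $\mathbb{E}^\mu[A(1)]\in S_N$ is a reasonable addition that the paper leaves implicit.
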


\section{Concluding remarks}
In this paper, we derived a necessary and sufficient condition for
consensus over a linear random network based on the connection between
consensus and stability as shown by \thmref{t-equivalence}. Although
our proof is shown under the discrete-time framework, the similar
results still hold in the continuous-time setting. Consequently, one
can similarly follow the procedure to obtain consensus conditions based
on stability results of \cite{ZYS09} on hybrid switching
continuous-time systems.  

Regarding the second-order random network, one can also utilize the result
of this work. More precisely, for 
$$X(t) = \alpha A(t) X(t-1) + \beta B(t) X(t-2), \quad t = 2,3,
\ldots$$
where $\alpha + \beta =1$, $\alpha, \beta \ge 0$, $\{A(t)\}$ and
$\{B(t)\}$ are i.i.d. stochastic matrix sequence with given distributions
$\mu_A$ and $\mu_B$ on $S_N$, the problem is equivalent to 
$$Y(t) = C(t) Y(t-1),$$
where $Y(t) = \left[
  \begin{array}{l}
    X(t)\\
    X(t-1)
  \end{array}\right]$ is an $\mathbb{R}^{2N}$-vector and $C(t) = \left[
  \begin{array}{ll}
    \alpha A(t) & \beta B(t) \\ I & 0
  \end{array} \right]$ is a stochastic matrix in $S_{2N}$.

One last interesting remark is the application of Kolmogorov's 0-1 law
\cite{Dur05}, which is firstly given in \cite{TJ08} in the context of
ergodicity of i.i.d. matrix sequence. Similar result also holds 
in the consensus and the stability problems. For instance, now we know that
$X^x(t)$ defined in \eqref{eq:X} does not reach consensus when
$\lambda_2(\mathbb{E} [A(1)])\ge 1$ for a given distribution $\mu$ in
any of the three modes. In other words,  
$$\mathbb{P} \Big\{\omega\in \Omega: X^x(t, \omega) \hbox{ reaches
  consensus for all } x\in \mathbb{R}^N \Big\} < 1.$$  
Natural question is then, what the above probability is when
$\lambda_2(\mathbb{E} [A(1)])\ge 1$. The answer is surprisingly
simple: zero.
\begin{prop}
  Consider $X^x(t)$ defined in \eqref{eq:X} generated by i.i.d. matrices
  with distribution $\mu$. Then
  $$\mathbb{P} \Big\{\omega\in \Omega: X^x(t, \omega) \hbox{ reaches
    consensus for all } x\in \mathbb{R}^N \Big\} $$  
  is either $1$ or $0$.
\end{prop}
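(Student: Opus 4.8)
The plan is to show that the event in question equals, up to a $\mathbb{P}$-null set, a tail event of the i.i.d.\ sequence $\{A(t)\}$, and then to apply Kolmogorov's $0$--$1$ law. Note that \thmref{t-randconsensus} is not needed for the dichotomy itself; it only serves afterwards to decide which of the two values actually occurs, namely $1$ precisely when $\lambda_2(\mathbb{E}^\mu[A(1)])<1$.

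First I would recast the event analytically. For integers $t\ge k\ge 1$ set $M_k(t)=\Pi^\perp A(t)\,\Pi^\perp A(t-1)\cdots\Pi^\perp A(k)$ and write $M(t):=M_1(t)$. Iterating \eqref{eq:p-Jordan1} exactly as in the derivation of \eqref{eq:l-modes1} yields $\Pi^\perp X^x(t)=M(t)\,x$ for every $x\in\mathbb{R}^N$ on every sample path. Since the chain of inequalities appearing in the proof of \thmref{t-equivalence} holds pathwise, for a fixed $\omega$ the sequence $\{X^x(t,\omega)\}$ reaches consensus for \emph{every} $x$ if and only if $\|M(t,\omega)\|_\infty\to 0$ (test on the coordinate vectors of $\mathbb{R}^N$ for the forward direction; the converse is immediate from $\|M(t)x\|_\infty\le\|M(t)\|_\infty\|x\|_\infty$). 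Consequently the set whose probability we must compute is
\[
  E \;=\; F_1,\qquad
  F_k \;:=\; \Bigl\{\omega\in\Omega:\ \lim_{t\to\infty}\bigl\|M_k(t,\omega)\bigr\|_\infty = 0\Bigr\},
\]
and each $F_k$ is measurable with respect to $\sigma\bigl(A(k),A(k+1),\dots\bigr)$.

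Next I would record two facts. (i) The sets $F_k$ are nested decreasingly: from $M_k(t)=M_{k+1}(t)\,\Pi^\perp A(k)$ (for $t\ge k+1$) and submultiplicativity of the operator norm, $\|M_{k+1}(t)\|_\infty\to 0$ forces $\|M_k(t)\|_\infty\to 0$, so $F_{k+1}\subseteq F_k$. (ii) All $F_k$ share the same probability: by the i.i.d.\ assumption the shifted sequence $\bigl(A(k),A(k+1),\dots\bigr)$ has the same law as $\bigl(A(1),A(2),\dots\bigr)$, and $F_k$ is the same measurable functional of the former that $F_1$ is of the latter, hence $\mathbb{P}(F_k)=\mathbb{P}(F_1)$. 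Combining (i) with continuity of $\mathbb{P}$ from above and (ii) gives $\mathbb{P}\bigl(\bigcap_k F_k\bigr)=\lim_k\mathbb{P}(F_k)=\mathbb{P}(F_1)=\mathbb{P}(E)$. Finally, $F_\infty:=\bigcap_k F_k$ is a genuine tail event, since for every $m$ one has $F_\infty=\bigcap_{k\ge m}F_k\in\sigma\bigl(A(m),A(m+1),\dots\bigr)$, so $F_\infty\in\bigcap_m\sigma\bigl(A(m),A(m+1),\dots\bigr)$; Kolmogorov's $0$--$1$ law then gives $\mathbb{P}(F_\infty)\in\{0,1\}$, whence $\mathbb{P}(E)\in\{0,1\}$.

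The hard part is the asymmetry in fact (i): only $F_{k+1}\subseteq F_k$ holds, not the reverse — for example, if $A(k,\omega)$ has all rows equal then $\Pi^\perp A(k,\omega)=0$, so $M_k(t,\omega)\equiv 0$ no matter how $M_{k+1}(t,\omega)$ behaves. This is exactly why $E$ itself need not be a tail event, and why one must detour through $F_\infty$ and exploit that $\mathbb{P}(F_k)$ is independent of $k$; apart from that observation, every step is routine.
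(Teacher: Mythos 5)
Your argument is correct and is essentially the paper's own proof: the paper only sketches it, pointing to the decreasing events $B_k=\{\omega:\ \prod_{t=k}^{\infty}A(t)x \hbox{ reaches consensus for all } x\}$ and the tail $\sigma$-field argument of \cite[Lemma 1]{TJ08}, and your $F_k$ (reformulated via $\|M_k(t)\|_\infty\to 0$ using Theorem~\ref{t-equivalence}) are exactly these events, with the shift-invariance of the i.i.d.\ law supplying $\mathbb{P}(F_k)=\mathbb{P}(F_1)$ and Kolmogorov's $0$--$1$ law applied to $\bigcap_k F_k$. You have simply filled in the details the paper leaves to the reader, including the correct observation that $E=F_1$ itself need not be a tail event, which is why the detour through $F_\infty$ is needed.
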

Proof can be accomplished similarly to \cite[Lemma 1]{TJ08}, by 
using the tail $\sigma$-field argument on decreasing events of the form
$$B_k = \Big\{ \omega: \Pi_{t=k}^\infty A(t) x \hbox{ reaches consensus for all }
x\in \mathbb{R}^N\Big\}.$$

\bibliographystyle{plain}

\begin{thebibliography}{99}

\bibitem{BP94}
A.~Berman and R.~J. Plemmons.
\newblock {\em Nonnegative matrices in the mathematical sciences}, volume~9.
\newblock Society for Industrial and Applied Mathematics (SIAM), Philadelphia,
  PA, 1994.
\newblock Revised reprint of the 1979 original edition.

\bibitem{Dur05}
R.~Durrett.
\newblock {\em Probability: Theory and examples}.
\newblock The Wadsworth \& Brooks/Cole Statistics/Probability Series. Wadsworth
  \& Brooks/Cole Advanced Books \& Software, Pacific Grove, CA, 3rd edition,
  2005.

\bibitem{HM05}
Y.~Hatano and M.~Mesbahi.
\newblock Agreement over random networks.
\newblock {\em IEEE Trans. Automat. Control}, 50(11):1867--1872, 2005.

\bibitem{TJ08}
A.~Tahbaz-Salehi and A.~Jadbabaie.
\newblock A necessary and sufficient condition for consensus over random
  networks.
\newblock {\em IEEE Trans. Automat. Control}, 53(3):791--795, 2008.

\bibitem{ZYS09}
C.~Zhu, G.~Yin, and Q.~S. Song.
\newblock Stability of random-switching systems of differential equations.
\newblock {\em Quart. Appl. Math.}, 67(2):201--220, 2009.

\end{thebibliography}

\end{document}